\documentclass{article}
\usepackage[applemac]{inputenc}
\usepackage{cite}
\usepackage{fullpage}
\usepackage{amssymb}
\usepackage{amsmath}
\usepackage{algorithm, algorithmic}
\usepackage{graphicx}
\usepackage{relate}
\setlength{\relateright}{5pt}
\setlength{\relateleft}{0pt}

\newcommand{\occ}{\ensuremath{\mathrm{occ}}}

\newcommand{\bin}{\ensuremath{\mathrm{bin}}}

\newcommand{\fail}{\ensuremath{\mathrm{fail}}}
\newcommand{\num}{\ensuremath{\mathrm{num}}}

\newcommand{\enc}{\ensuremath{\mathrm{enc}}}
\newcommand{\hforward}{\ensuremath{\mathrm{hforward}}}
\newcommand{\hfail}{\ensuremath{\mathrm{hfail}}}
\newcommand{\accept}{\ensuremath{\mathrm{accept}}}

\newcommand{\Next}{\ensuremath{\textsc{Next}}}

\newcommand{\ceil}[1]{\left\lceil{#1}\right\rceil}
\newcommand{\floor}[1]{\left\lfloor{#1}\right\rfloor}

\newtheorem{lemma}{Lemma}
\newtheorem{theorem}{Theorem}
\newcommand{\qed}{\hfill\ensuremath{\Box}\medskip\\\noindent}
%\newproof{proof}{Proof}
\newenvironment{proof}{\noindent\emph{Proof. }}

\title{Fast Searching in Packed Strings\footnote{An externded abstract of this paper appeared at the 20th Annual Symposium on Combinatorial Pattern Matching.}}

%\title{Fast Searching in Packed Strings\tnoteref{t1}}
%\tnotetext[t1]{An extended abstract of this paper appeared at the 20th Annual Symposium on Combinatorial Pattern Matching.}

\author{Philip Bille\thanks{Supported by the Danish Agency for Science, Technology, and Innovation.}  \\  {\tt phbi@imm.dtu.dk}}

%\author[dtu]{Philip Bille\fnref{fn1}}
%\ead{phbi@imm.dtu.dk}
%\fntext[fn1]{Supported by the Danish Agency for Science, Technology, and Innovation.}
%\address[dtu]{Technical University of Denmark, Kgs. Lyngby, Denmark}

\begin{document}
\maketitle

\begin{abstract}
Given strings $P$ and $Q$ the (exact) string matching problem is to find all positions of substrings in $Q$ matching $P$. The classical Knuth-Morris-Pratt algorithm [SIAM J. Comput., 1977] solves the string matching problem in linear time which is optimal if we can only read one character at the time. However, most strings are stored in a computer in a packed representation with several characters in a single word, giving us the opportunity to read multiple characters simultaneously. In this paper we study the worst-case complexity of string matching on strings given in packed representation. Let $m \leq n$ be the lengths $P$ and $Q$, respectively, and let $\sigma$ denote the size of the alphabet. On a standard unit-cost word-RAM with logarithmic word size we present an algorithm using time 
$$
O\left(\frac{n}{\log_\sigma n} + m + \occ\right).
$$
Here $\occ$ is the number of occurrences of $P$ in $Q$. For $m = o(n)$
this improves the $O(n)$ bound of the Knuth-Morris-Pratt algorithm.
Furthermore, if $m = O(n/\log_\sigma n)$ our algorithm is optimal
since any algorithm must spend at least $\Omega(\frac{(n+m)\log
  \sigma}{\log n} + \occ) = \Omega(\frac{n}{\log_\sigma n} + \occ)$
time to read the input and report all occurrences. The result is
obtained by a novel automaton construction based on the
Knuth-Morris-Pratt algorithm combined with a new compact
representation of subautomata allowing an optimal tabulation-based
simulation. \end{abstract}

\section{Introduction}
Given strings $P$ and $Q$ of length $m$ and $n$, respectively, the
\emph{(exact) string matching problem} is to report all positions of
substrings in $Q$ matching $P$. The string matching problem is perhaps
the most basic problem in combinatorial pattern matching and also one
of the most well-studied, see e.g.,~\cite{KMP1977, BM1977,KR1987,
  BYG1992} for classical textbook algorithms and the surveys
in~\cite{Gusfield1997, NR2002}. The first worst-case $O(n)$ algorithm
(we assume w.l.o.g. that $m \leq n$) is the classical
Knuth-Morris-Pratt algorithm~\cite{KMP1977}. If we assume that we can
read only one character at the time this bound is optimal since we
need $\Omega(n)$ time to read the input. However, most strings are
stored in a computer in a \emph{packed representation} with several
characters in a single word. For instance, DNA-sequences have an
alphabet of size $4$ and are therefore typically stored using $2$ bits
per character with $32$ characters in a $64$-bit word. On packed
strings we can read multiple characters in constant time and hence
potentially do better that the $\Omega(n)$ lower bound for string
matching. In this paper we study the worst-case complexity of packed
string matching and present an algorithm to beat the $\Omega(n)$ lower
bound for almost all combinations of $m$ and $n$.

\subsection{Setup and Results}
We assume a standard unit-cost word RAM with word length $w =
\Theta(\log n)$ and a standard instruction set including arithmetic
operations, bitwise boolean operations, and shifts. The space
complexity is the number of words used by the algorithm, not counting
the input which is assumed to be read-only. All strings in this paper
are over an alphabet $\Sigma$ of size $\sigma$. The \emph{packed
  representation} of a string $A$ is obtained by storing $\Theta(\log
n/\log \sigma)$ characters per word thus representing $A$ in
$O(|A|\log \sigma/\log n) = O(|A|/\log_\sigma n)$ words. If $A$ is
given in the packed representation we simply say that $A$ is a
\emph{packed string}. The \emph{packed string matching problem} is
defined as above except that $P$ and $Q$ are packed strings. In the
worst case any algorithm for packed string matching must examine all
of the words in the packed representation of the input strings. The
algorithm must also report all occurrences of $P$ in $Q$ and therefore
must spend at least $\Omega\left(\frac{n}{\log_\sigma n} +
  \occ\right)$ time, where $\occ$ denotes the number of occurrences of
$P$ in $Q$. In this paper we present an algorithm with the following
complexity.
\begin{theorem}\label{thm:main}
For packed strings $P$ and $Q$ of length $m$ and $n$, respectively,
with characters from an alphabet of size $\sigma$, we can solve the
packed string matching problem in time $O\left(\frac{n}{\log_\sigma n}
  + m + \occ\right)$ and space $O(n^{\varepsilon} + m)$ for any
constant $\varepsilon$, $0 < \varepsilon < 1$.  
\end{theorem}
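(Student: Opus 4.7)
The plan is to simulate a Knuth--Morris--Pratt-style automaton for $P$ on the packed representation of $Q$, but consuming a whole \emph{block} of $k = \Theta(\varepsilon \log_\sigma n)$ characters per step via precomputed lookup tables (the Four Russians technique). First I would build the standard KMP failure function and transition automaton for $P$ in $O(m)$ time. After this preprocessing, processing $Q$ reduces to advancing the current automaton state over successive blocks; since each block occupies $O(1)$ machine words and there are $O(n/\log_\sigma n)$ of them, the target bound follows if each block takes $O(1)$ time plus amortized $O(1)$ per reported occurrence.

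The central difficulty is table size. The KMP automaton has $\Theta(m)$ states, so a naive table indexed by (state, block content) has $\Theta(m \cdot \sigma^k) = \Theta(m \cdot n^\varepsilon)$ entries, violating the $O(n^\varepsilon + m)$ space budget. The plan, matching the paper's stated strategy, is to decompose the KMP automaton into $O(m/k)$ overlapping \emph{subautomata}, each covering a contiguous range of $k$ consecutive KMP states together with summary information about where the failure links inside this range lead (potentially to states outside the range). Each subautomaton must then be encoded into $O(\log n)$ bits so that a \emph{single} universal lookup table, indexed by the triple (subautomaton encoding, entry state inside it, block content), returns: the exit state, a flag indicating whether the block exhausted the subautomaton, and the in-block positions of any accepting states visited. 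The table has only $n^{O(\varepsilon)}$ entries and is built once in $O(n^\varepsilon)$ time; the per-subautomaton $O(\log n)$-bit codes take $O(m/k \cdot \log n / w) = O(m/\log_\sigma n)$ space in total.

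The hard part is the compact representation of subautomata. A subautomaton is specified by $k$ characters of $P$ (taking $k \log \sigma = \varepsilon \log n$ bits, essentially tight) plus the destinations of failure arrows that escape the range, and it is not obvious how to compress the latter without exceeding the bit budget. I expect the solution to exploit monotonicity of KMP failure links and to split failure-exit handling into a two-level scheme: inside-range simulation is resolved by the universal block table, while any block that causes the automaton to leave its current subautomaton pays a one-off cost to transition to the neighboring subautomaton. Amortizing such cross-subautomaton events against the blocks that triggered them keeps the total simulation cost $O(n/\log_\sigma n)$.

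Finally, occurrence reporting is handled by augmenting each table entry with a compact list of block offsets at which a full match of $P$ is completed; these are decoded in $O(1 + \mathrm{occ}_{\mathrm{block}})$ time per block, contributing a global $O(\mathrm{occ})$ term. Combining the $O(m)$ preprocessing, the $O(n^\varepsilon)$ table construction, the $O(n/\log_\sigma n)$ block simulation, and the $O(\mathrm{occ})$ reporting yields the time bound of Theorem~\ref{thm:main}, and the structures described above fit within the claimed space bound of $O(n^\varepsilon + m)$.
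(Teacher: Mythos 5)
Your architecture is the paper's: overlapping subautomata of $r = \Theta(\varepsilon\log_\sigma n)$ consecutive KMP states, an $O(\varepsilon\log n)$-bit code per subautomaton, a single universal lookup table, and amortization of cross-subautomaton events. However, the two steps you leave open are precisely where the proof lives, and one of them, as you state it, is false. ``Amortizing such cross-subautomaton events against the blocks that triggered them'' does not work for an arbitrary overlapping decomposition: a cross-boundary event is not triggered by a block. A single forward transition can cross into the next subautomaton and the very next failure transition can cross back, so the simulation may change subautomaton after every $O(1)$ characters of $Q$ and the cost degenerates to $\Omega(n)$ --- the paper flags exactly this failure mode in its techniques section. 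What makes the amortization go through is a specific design: segments of $r$ states overlapping by exactly $r/2$, together with the rule that a cross-segment (``heavy'') transition whose target state is shared by two segments always enters the segment in which that state lies in the \emph{leftmost half}. Since a heavy forward transition can only leave from the rightmost state of a segment, at least about $r/2$ light transitions must then intervene before the next heavy forward transition, bounding heavy forward transitions by $O(n/r)$; heavy failure transitions are in turn bounded by twice the heavy forward ones via a potential argument on the segment index (heavy forward transitions increase it by $1$ or $2$, heavy failure transitions strictly decrease it). This is Lemma~\ref{lem:heavy}, and your proposal needs it or an equivalent.

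The second gap is the encoding: ``I expect the solution to exploit monotonicity of KMP failure links'' restates the problem rather than solving it, and the naive $O(r\log r)$ bits for the in-range failure pointers already blow up the universal table to $n^{\Theta(\log\log n)}$ entries when $\sigma$ is constant. The quantitative fact you need is that $\fail(s+1)\leq \fail(s)+1$ forces the failure values over $r$ consecutive states to increase by at most $r$ in total, so the sum of absolute differences between consecutive light failure pointers is $O(r)$; a delta encoding with a companion bit string marking number boundaries, plus Jensen's inequality to bound $\sum_j \log|d_j|$ by $O(r)$, stores all light failure pointers in $O(r)$ bits, and with the $r\log\sigma$ bits of forward labels the whole segment fits in $O(r\log\sigma)=O(\varepsilon\log n)$ bits. (Your variant in which the table returns in-block accepting positions as a $k$-bit mask is fine and equivalent to the paper's choice of terminating the tabulated light path at the first accepting or heavy transition and charging that step to $\occ$.)
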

For $m = o(n)$ this improves the $O(n)$ bound of the
Knuth-Morris-Pratt algorithm. Furthermore, if $m = O(n/\log_\sigma n)$
our algorithm matches the lower bound and is therefore optimal. In
practical situations $m$ is typically much smaller than $n$ and
therefore this condition is almost always satisfied.

\subsection{Techniques}\label{sec:techniques}
The KMP-algorithm~\cite{KMP1977} may be viewed as simulating an
automaton $K$ according to the characters from $Q$ in a left-to-right
order. At each character in $Q$ we use $K$ to maintain the longest
prefix of $P$ matching the current suffix of $Q$. Several improvements
of automaton simulation based algorithms are known, see
e.g~\cite{MP1980, Myers1992,BYG1992, WMM1995}. Typically, these
algorithms partition the automaton into many small subautomata which
can then be quickly simulated fast using either precomputed and
tabulated information and/or the arithmetic and logical operations of
the machine. The idea is then to use the fast simulation for the small
subautomata to obtain a faster simulation of the entire automata. This
approach is often called the ``Four Russian Technique''
~\cite{ADKF1970} when tabulation is used and ``word-level
parallelism'' or ``bitparallelism''~\cite{BaezaYates1989} when only
arithmetic and logical instructions are used. There are (at least) two
central components needed to make this approach effective. First, the
partition into subautomata should allow for efficient distribution of
the computation among the subautomata. Secondly, for tabulation or
word-level parallelism to work efficiently the subautomata must be
encoded compactly.

If we attempt to apply this idea to the KMP-algorithm the above
challenges pose major problems. First, the structure of the
transitions in $K$ does not in general allow us to partition $K$ into
subautomata such that a simulation does not change subautomata too
often. Indeed, for any partition we might be forced to repeatedly
change subautomaton after every group of $O(1)$ characters of $Q$ and
hence end up using $\Omega(n)$ time. Secondly, even if we could design
a suitable partition of $K$ into subautomata, a compact encoding of
the transitions of a subautomata is non-trivial to obtain. An explicit
list of such transitions will not suffice to achieve the bound of
Theorem~\ref{thm:main}. The main contribution of this paper are two
new ideas to overcome these problems.

First, we present the \emph{segment automaton}, $C$, derived from $K$.
In $C$, the states of $K$ are grouped into overlapping intervals of $r
= \Theta(\log n/ \log \sigma)$ states from $K$ such that (almost all
of) the states in $K$ are duplicated in $C$. We show how to
selectively ``copy'' the transitions from $K$ to $C$ such that the
total number of transitions between subautomata never exceeds $O(n/r)$
in the simulation on $Q$. Secondly, we show how to exploit structural
properties of the transitions to represent subautomata optimally. This
allows us to tabulate paths of transitions for all subautomata of size
$<r$ using $O(\sigma^r + m)$ space and preprocessing time. Plugging in
$r = \varepsilon \log n/ \log \sigma$, for constant $\varepsilon > 0$,
this is $O(n^{\varepsilon} + m)$ space and preprocessing time. The
simulation can then be performed in time $O(n/r + \occ) =
O(n/\log_\sigma n + \occ)$ leading to Theorem~\ref{thm:main}.

This main contribution of this paper is theoretical, however, we
believe that both the segment automaton and the compact representation
of automata may prove very useful in practice if combined with ideas
from other algorithms for packed matching.

\subsection{Related Work}
Exploiting packed string representations to speed-up string matching
is not a new idea and is even mentioned in the early papers by Knuth
et al. and Boyer and Moore~\cite{KMP1977, BM1977}. More recently,
several packed string matching algorithms have
appeared~\cite{BaezaYates1989b, TP1997,
  Fredriksson2002,Fredriksson2003, KBN2007,FL2009, FL2009a}. However,
none of these improve the worst-case $O(n)$ bound of the classical
KMP-algorithm.

It is possible to extend the ``super-alphabet'' technique by
Fredriksson~\cite{Fredriksson2002, Fredriksson2003} to obtain a simple
trade-off for packed string matching. The idea is to build an
automaton that, similar to the KMP-automaton, maintains the longest
prefix of $P$ matching the current suffix of $Q$ but allows $Q$ to be
processed in groups of $r$ characters. Each state has $\sigma^r$
outgoing transitions corresponding to all combinations of $r$
characters. This algorithm uses $O(n/r + m\sigma^r)$ time and
$O(m\sigma^r)$ space. Choosing $r = \varepsilon \log_{\sigma} n$ this
is $O(n/\log_\sigma n + mn^{\epsilon})$ time and $O(mn^{\epsilon})$
space. Compared to Theorem~\ref{thm:main} this is a factor $\Theta(m)$
worse in space and only improves the $O(n)$ time bound of the
KMP-algorithm when $m = o(n^{1-\epsilon})$.

Packed string matching is closely related to the area of
\emph{compressed pattern matching} introduced by Amir and
Benson~\cite{AB1992,AB1992a}. Here the goal is to search for an
uncompressed pattern in a compressed text without decompressing it
first. Furthermore, the search should be faster than the naive
approach of decompressing the text first and then using the fastest
algorithm for the uncompressed problem. In \emph{fully compressed
  pattern matching} the pattern is also given in compressed form.
Several algorithms for (fully) compressed string matching are known,
see e.g., the survey by Rytter~\cite{Rytter1999}. For instance, if $Q$
is compressed with the Ziv-Lempel-Welch scheme~\cite{Welch1984} into a
string $Z$ of length $z$, Amir et al.~\cite{ABF1996} showed how to
find all occurrences of $P$ in time $O(m^2 + z)$. The packed
representation of a string may be viewed as the most basic way to
compress a string. Hence, in this perspective we are studying the
fully compressed string matching problem for packed strings. Note that
our result is optimal if the pattern is not packed.

As mentioned above, speeding up automaton based algorithms for string
matching is a well-known idea that has been succesfully applied to a
number of string matching problems. For instance,
Myers~\cite{Myers1992} showed how speed up the simulation of
Thompson's~\cite{Thomp1968} automaton construction for the regular
expression matching problem. The improvement here is from updating a
set of states in a non-deterministic automaton efficiently, whereas we
obtain an improvement by processing multiple characters quickly.  With
few exceptions, see e.g.,~\cite{MP1980, BT2009}, most of the known
improvements of automata-based algorithms are based on updating sets
of states.

\subsection{Outline}
We first briefly review the KMP-algorithm and how it can be viewed as
simulating an automaton in Sec.~\ref{sec:KMP}. We then present the
two major components of our algorithm. Specifically, in
Sec.~\ref{sec:segment} we present the segment automaton and in
Sec.~\ref{sec:representation} we show how to compactly represent
and efficiently tabulate subautomata. In Sec.~\ref{sec:algorithm}
we put the components together and present the complete
algorithm. Finally, in Sec.~\ref{sec:remarks} we conclude with some
remarks and open problems.

\section{The Knuth-Morris-Pratt Automaton and String Matching}\label{sec:KMP}
We briefly review the KMP-algorithm~\cite{KMP1977} which will be the
starting point of our new algorithm. The algorithm that we describe
and use is a slightly simpler version of the KMP-algorithm, often
referred to as the ``Morris-Pratt'' algorithm~\cite{MP1970}, which
suffices to achieve our results.

Let $A$ be a string of length $|A|$ on an alphabet $\Sigma$. The
character at position $i$ in $A$ is denoted $A[i]$ and the substring
from position $i$ to $j$ is denoted by $A[i,j]$. The substrings
$A[1,j]$ and $A[i, |A|]$ are the \emph{prefixes} and \emph{suffixes}
of $A$, respectively.

The Knuth-Morris-Pratt automaton (KMP-automaton), denoted $K(P)$, for
$P$ consists of $m+1$ states identified by the integers $\{0, \ldots,
m\}$ each corresponding to a prefix of $P$. From state $s$ to state
$s+1$, $0 \leq s < m$ there is a \emph{forward transition} labeled
$P[s]$. We call the rightmost forward transition from $m-1$ to $m$ the
\emph{accepting transition}. From state $s$, $0 < s \leq m$, there is
a \emph{failure transition} to a state denoted $\fail(s)$ such that
$P[1,\fail(s)]$ is the longest prefix of $P$ matching a proper suffix
of $P[1,s]$. Fig.~\ref{fig:segment}(a) depicts the KMP-automaton for
the pattern $P = \text{ababca}$. \begin{figure}[t]
  \centering \includegraphics[scale=.5]{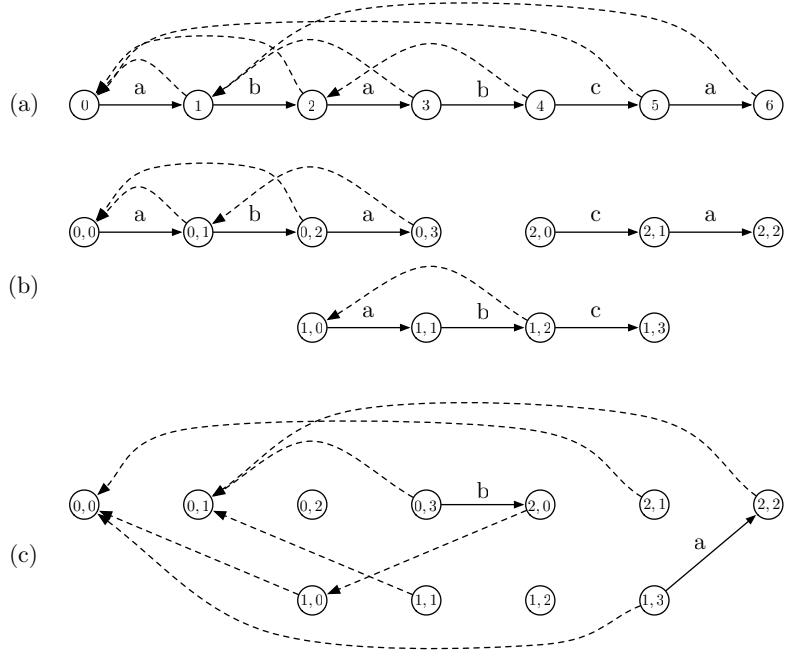}
  \caption{(a) The Knuth-Morris-Pratt automaton $K(P)$ for the pattern
    $P = \text{ababca}$. Solid lines are forward transitions and
    dashed lines are failure transitions. (b)-(c) The corresponding
    segment automaton $C(P, 4)$ for $P$ consisting of $3$ segments
    with $4$, $4$, and $3$ states. The light transitions are shown in
    (b) and the heavy transition transitions in (c).}
   \label{fig:segment}
\end{figure}

The failure transitions form a tree with root in state $0$ and with
the property that $\fail(s) < s$ for any state $s$. Since the longest
prefixes of $P[1, s]$ and $P[1,s+1]$ matching a suffix of $P$ can
increase by at most one character we have the following property of
failure transitions.
\begin{lemma}\label{lem:fail}
  Let $P$ be a string of length $m$ and $K(P)$ be the KMP-automaton
  for $P$. For any state $1 < s < m$, $\fail(s+1) \leq \fail(s) + 1$.
\end{lemma}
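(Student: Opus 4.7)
The plan is to give a short direct argument from the definition of the failure function. Set $k = \fail(s+1)$, so by definition $P[1,k]$ is the longest proper prefix of $P$ matching a proper suffix of $P[1,s+1]$. If $k=0$ the inequality is immediate, so assume $k \geq 1$. Since $P[1,k]$ matches the suffix $P[s+2-k,\,s+1]$, looking at the first $k-1$ positions gives $P[1,k-1] = P[s+2-k,\,s]$, i.e., $P[1,k-1]$ is a (possibly empty) prefix of $P$ matching a proper suffix of $P[1,s]$.

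Next, I would verify that $P[1,k-1]$ is a \emph{proper} suffix of $P[1,s]$, as required by the definition of $\fail(s)$. This amounts to checking $k-1 < s$, which follows from the fact that $P[1,k]$ is a \emph{proper} prefix of $P[1,s+1]$, hence $k \le s$. Appealing to the maximality of $\fail(s)$ then yields $k-1 \le \fail(s)$, and therefore $\fail(s+1) = k \le \fail(s) + 1$, as desired.

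There is no genuine obstacle here; the proof is one definition-chasing step. The only spot that merits care is confirming that shortening the matched prefix by one character preserves properness (so that it is legal to apply the definition of $\fail(s)$), which is why I would explicitly handle the $k=0$ case separately and explicitly verify the strict inequality $k-1 < s$.
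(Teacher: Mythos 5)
Your proof is correct and is exactly the argument the paper sketches in one sentence before the lemma (``the longest prefixes of $P[1,s]$ and $P[1,s+1]$ matching a suffix of $P$ can increase by at most one character''); you have simply written out the definition-chasing in full, including the properness check that the paper leaves implicit. No further comparison is needed.
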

We will exploit this property in Sec.~\ref{sec:encoding} to
compactly encode subautomata of the KMP-automaton. The KMP-automaton
can be constructed in time $O(m)$~\cite{KMP1977}.

To find the occurrences of $P$ in $Q$ we read the characters of $Q$
from left-to-right while traversing $K(P)$ to maintain the longest
prefix of $P$ matching a suffix of the current prefix of $Q$ as
follows. Initially, we set the state of $K(P)$ to $0$. Suppose that we
are in state $s$ after reading the $k-1$ characters of $Q$, i.e., the
longest prefix of $P$ matching a suffix of $Q[1,k-1]$ is $P[1,s]$. We
process the next character $\alpha = Q[k]$ as follows. If $\alpha$
matches the label of the forward transition from $s$ the next state is
$s+1$. Furthermore, if this transition is the accepting transition
then $k$ is the endpoint of a substring of $Q$ matching $P$ and we
therefore report an occurrence. Otherwise, ($\alpha$ does not match
the label of the forward transition from $s$ to $s+1$) we recursively
follow failure transitions from $s$ until we find a state $s'$ whose
forward transition is labeled $\alpha$ in which case the next state is
$s'+1$, or if no such state exist we set the next state to be $0$. We
define the \emph{simulation} of $K(P)$ on $Q$ to be sequence of
transitions traversed by the algorithm.

Each time the simulation on $Q$ follows a forward transition we
continue to the next character and hence the total number of forward
transitions is at most $n$. Each failure transition strictly decreases
the current state number while forward transitions increase the state
number by $1$. Since we start in state $0$ the number of failure
transition is therefore at most the number of forward
transitions. Hence, the total number of transitions is at most $2n$
and therefore the searching takes $O(n)$ time. In total the
KMP-algorithm uses time $O(n+m) = O(n)$.

\section{The Segment Automaton}\label{sec:segment}
In this section we introduce a simple automaton called the
\emph{segment automaton}. The segment automaton for $P$ is equivalent
to $K(P)$ in the sense that the simulation on $Q$ at each step
provides the longest prefix $P$ matching a suffix of the current
prefix of $Q$. The segment automaton gives a decomposition of $K(P)$
into subautomata of a given size $r$ such that the simulation on $Q$
passes through at most $O(n/r)$ subautomata. In our packed string
matching algorithm, we will simulate the segment automaton using a fast
algorithm for simulating subautomata of size $r = \Theta(\log_\sigma
n)$, leading to the bound of Theorem~\ref{thm:main}.

Let $K = K(P)$ be the KMP-automaton for $P$. For a even integer
parameter $r$, $1 < r \leq m+1$ we define the segment automaton with
parameter $r$, denoted $C(P, r)$, as follows. Define a \emph{segment}
$S$ to be an interval $S = [l, r]$, $0 \leq l \leq r \leq m$, of
states in $K(P)$ and let $|S| = r - l + 1$ denote the size of
$S$. Divide the $m+1$ states of $K$ into a set of $z =
2\floor{(m+1)/r} + 1$ overlapping segments, denoted $SS = \{S_0,
\ldots, S_{z-1}\}$, where $S_i = [l_i, r_i]$ is defined by
\begin{equation*}
l_i = i \cdot \frac{r}{2} \qquad r_i = \min(l_i + r - 1, m) .
\end{equation*}
Thus, each segment in $SS$ consists of $r$ consecutive states from
$K$, except the last segment, $S_{z-1}$, which may be smaller. Any
state $s$ in $K$ appears in at most $2$ segments and adjacent segments
share $r/2$ states.

The segment automaton $C = C(P,r)$ is obtained by adding $|S|$ states
for each segment $S \in SS$ and then selectively ``copying''
transitions from $K$ to $C$. Specifically, the states of $C$ is the
set of pairs given by
$$
\{(i,j) \mid 0 \leq i < z, 0 \leq j < |S_i|\} .
$$ 
We view each state $(i,j)$ of $C$ as the $j$th state of the $i$th
segment, i.e., state $(i,j)$ corresponds to the state $l_i + j$ in
$K$. Hence, each state in $K$ is represented by $1$ or $2$ states in
$C$ and each state in $C$ uniquely corresponds to a state in $K$.

We copy transitions from $K$ to $C$ in the following way. Let $t = (s,
s')$ be a transition in $K$. For each segment $S_i$ such that $s \in
[l_i, r_i]$ we have the following transitions in $C$:
\begin{itemize}
\item If $s' \in [l_i, r_i]$ there is a \emph{light transition} from $(i, s - l_i)$ to $(i, s' - l_i)$. 
\item If $s' \not\in [l_i, r_i]$ there is a \emph{heavy transition} from $(i, s -l_i)$ to $(i', s' - l_{i'})$, where either $S_{i'}$ is the unique segment containing $s'$ or if two segments contain $s$, then $S_{i'}$ is the segment such that $s' \in [l_{i'}, l_{i'} + r/2]$, i.e., the segment containing $s'$ in the leftmost half.
\end{itemize}
If $t$ is a forward transition with label $\alpha \in \Sigma$ it is
also a forward transition in $C$ with label $\alpha$, if $t$ is a
failure transition it is also a failure transition in $C$, and if $t$
is the accepting transition it is also an accepting transition in $C$.
The segment automaton with $r = 4$ corresponding to the KMP-automaton
of Fig.~\ref{fig:segment}(a) is shown in Fig.~\ref{fig:segment}(b) and
(c) showing the light and heavy transitions, respectively. From the
correspondence between $C$ and $K$ we have that each accepting
transition in a simulation of $C$ on $Q$ corresponds to an occurrence
of $P$ in $Q$. Hence, we can solve string matching by simulating $C$
instead of $K$.

We will use the following key property of the $C$. 
\begin{lemma}\label{lem:heavy}
For a string $P$ of length $m$ and even integer parameter $1 < r \leq m+1$, the simulation of the segment automaton, $C(P,r)$, on a string $Q$ of length $n$ contains at most $O(n/r + \occ)$ heavy and accepting transitions. 
\end{lemma}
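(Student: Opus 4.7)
The plan is to bound the number of heavy forward transitions $H_F$ and the number of heavy failure transitions $H_B$ separately, each by $O(n/r)$, and to observe that accepting transitions number exactly $\occ$; summing then gives the claimed bound.

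For $H_F$, I would argue by direct position-counting. A heavy forward transition can fire only from the rightmost state of a segment, at segment-position $r-1$. Every visit to a segment begins either at the initial state $(0,0)$ or via an incoming heavy transition, which by the target-selection rule always lands at a segment-position $\leq r/2$. Hence each visit that ends in a heavy forward must traverse at least $r/2 - 1$ positions of net rightward motion within the segment, and since light failures cannot contribute rightward motion, the visit must contain at least $r/2 - 1$ light forward transitions. Summing over all such visits and using that the total number of forward transitions is at most $n$ gives $H_F = O(n/r)$.

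For $H_B$ I would use an amortized argument with the potential $\Pi = l_i$, the starting $K$-state of the current segment. Light transitions leave $\Pi$ unchanged. A heavy forward transition targets state $l_i + r$, which lies only in $S_{i+1}$ or $S_{i+2}$, so $\Delta \Pi \leq r$. A heavy failure transition targets a state $\fail(s) < l_i$, and by the target rule lands in a segment with $l_{i'} \leq \fail(s) < l_i$; since the $l_i$'s are distinct multiples of $r/2$, this forces $l_{i'} \leq l_i - r/2$, so $\Delta \Pi \leq -r/2$. Since $\Pi$ starts at $0$ and always remains non-negative, the total positive change from heavy forwards must dominate the total negative change from heavy failures, yielding $(r/2) H_B \leq r H_F$ and hence $H_B \leq 2 H_F = O(n/r)$. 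Accepting transitions are exactly the forwards into state $m$, which is $\occ$ of them, completing the bound.

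The main obstacle I anticipate is the $H_B$ bound. The natural attempt, charging each heavy failure to the drop it causes in the KMP state number, fails because a heavy failure originating from a low position within a segment can decrease the state number by as little as $1$. The fix is to use the coarser potential $\Pi = l_i$: since segment starts are spaced at multiples of $r/2$, any change of segment incurs an $\Omega(r)$ change in $\Pi$, regardless of where within the segment the transition originates.
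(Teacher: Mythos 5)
Your proposal is correct and follows essentially the same route as the paper: the paper also bounds heavy failure transitions by twice the heavy forward transitions via monotonicity of the segment number (your potential $\Pi = l_i$ is just $i\cdot r/2$ in disguise), and bounds heavy forward transitions by $O(n/r)$ by observing that every heavy transition lands in the leftmost half of a segment, so $\Omega(r)$ light transitions must intervene before the next heavy forward can fire from position $r-1$. Your bookkeeping (counting only light forward transitions against the budget of $n$ forward transitions, rather than all transitions against $2n$) is a minor cosmetic variant of the same argument.
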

\begin{proof}
  Consider the sequence $T$ of transitions in the simulation of $C =
  C(P,r)$ on $Q$. Let $N_{\accept}$ denote the number of accepting
  transitions, and let $N_{\hforward}$ and $N_{\hfail}$ denote the
  number of heavy forward and heavy failure transitions,
  respectively. Each accepting transition in $T$ corresponds to an
  occurrence and therefore $N_{\accept} = \occ$. For a state $(i,j)$
  in $C$ we will refer to $i$ as the \emph{segment number}. Since a
  forward transition in $K$ increases the state number by $1$ in $K$ a
  heavy forward transition increases the segment number by $1$ or $2$
  in $C$. A heavy failure transition strictly decrease the segment
  number. Hence, since we start the simulation in segment $0$, we can
  have at most $2$ heavy failure transitions for each heavy forward
  transition in $T$ and therefore
\begin{equation}\label{eq:Nfail}
N_{\hfail} \leq 2N_{\hforward} . 
\end{equation}

If $N_{\hforward} = 0$ the results trivially follows. Hence, suppose
that $N_{\hforward} > 0$. Before the first heavy forward transition in
$T$ there must be at least $r-1$ light transitions in order to reach
state $(0, r-1)$. Consider the subsequence of transitions $t$ in $T$
between an arbitrary heavy transition $h$ and a forward heavy
transition $f$. The heavy transition $h$ cannot end in segment $z-1$
since there is no heavy forward transition from here. All other heavy
transitions have an endpoint in the leftmost half of a segment and
therefore at least $r-1/2$ light transitions are needed before a heavy
forward transition can occur. Recall that the total number of
transition in $T$ is at most $2n$ and therefore the number of heavy
forward transitions in $T$ is bounded by
\begin{equation}\label{eq:Nforward}
N_{\hforward} \leq 2n/(r-1/2) = 4n/r-1.
\end{equation}
Combining the bound on $N_{\accept}$ with \eqref{eq:Nfail} and
\eqref{eq:Nforward} we have that the total number of heavy and
accepting transitions is
$$
N_{\hforward} + N_{\hfail} + N_{\accept} \leq 3N_{\hforward} + \occ = O(n/r + \occ).
$$\qed
\end{proof}

\section{Representing Segments}\label{sec:representation}
We now describe how to efficiently encode segments how to use the
encoding to efficiently tabulate transitions within segments. In the
following section, we show how to combine this with a simulation of
the segment automaton, leading to the full algorithm for packed
string matching.

\subsection{A Compact Encoding}\label{sec:encoding}
Let $S$ be a segment with $r$ states over an alphabet of size
$\sigma$. We show how to compactly represent all light transitions in
$S$ using $O(r\log \sigma)$ bits. To represent forward transitions we
simply store the labels of the $r-1$ light forward transitions in $S$
using $(r-1)\log \sigma = O(r\log \sigma)$ bits. Next consider the
failure transitions. A straightforward approach is to explicitly store
for each state $s \in S$ a bit indicating if its failure pointer is
light or heavy and, if it is light, a pointer to $\fail(s)$. Each
pointer requires $\ceil{\log r}$ bits and hence the total cost for
this representation is $O(r \log r)$ bits. We show how to improve this
to $O(r)$ bits in the following.

First, we locally enumerate the states in $S$ to $[0, r-1]$. Let $I =
\{i_1, \ldots, i_\ell$\}, $0 \leq i_1 < \cdots < i_\ell < r$, be the
set of states in $S$ with a light failure transition and let $F =
\{f_{i_1}, \ldots, f_{i_\ell}\}$ be the set of failure pointers for
the states in $I$. We encode $I$ as a bit string $B_I$ of length $r$
such that $B_I[j] = 1$ iff $j \in I$. This uses $r$ bits. To represent
$F$ compactly we encode $f_{i_1}$ and the sequence of differences
between consecutive elements $D = d_{i_2}, \ldots, d_{i_{\ell}}$,
where $d_{i_j} = f_{i_j} - f_{i_{j-1}}$. We represent $f_{i_1}$
explicitly using $\ceil{\log r}$ bits. Our representation of $D$
consists of $2$ bit strings. The first string, denoted $B_D$, is the
concatenation of the binary encoding of the numbers in $D$, i.e., $B_D
= \bin(d_{i_2}) \cdots \bin(d_{i_\ell})$, where $\bin(\cdot)$ denotes
standard two's complement binary encoding (the differences may be
negative) and $\cdot$ denotes concatenation. Each number $d_j$ uses at
most $1 + \log |d_j|$ bits and therefore the size of the $B_D$ is at
most
\begin{equation}\label{eq:bd}
|B_D| \leq \sum_{j \in I'} (|\log(d_j)| + 1) < r + \sum_{j \in I'} |\log(d_j)|, 
\end{equation}
where $I' = I \backslash \{i_1\}$. The second bit string, denoted
$B_{D'}$, represents the boundaries of the numbers in $B_D$, i.e.,
$B_{D'}[k] = 1$ iff $k$ is the start of a new number in $B_D$. Thus,
$|B_{D'}| = |B_{D}|$. Note that with $f_1$, $B_D$, and $B_{D'}$ we can
uniquely decode $F$. The total size of the representation is
$\ceil{\log r} + 2|B_D|$ bits.

To bound the size of the representation we show that $|B_D| = O(r)$
implying that the representation uses $\ceil{\log r} + 2\cdot O(r) =
O(r)$ bits as desired. We first bound the sum $\sum_{j \in I'}
|d_j|$. Recall from Lemma~\ref{lem:fail} that the failure function
increases by at most $1$ between consecutive states in $K$. Hence,
over the subsequence $F$ of $<r$ of failure pointers in the range
$[0,r-1]$ the total increase of the failure function can be at most
$r$. Hence, $\sum_{j \in I'} d_j \leq r$. Furthermore, if $f_1 = x$,
for some $x \in [0, r-1]$, the total decrease of $F$ over a segment of
$r$ states is at most $x$ plus the total increase and therefore
$\sum_{j \in I'} d_j \geq -(x+r) \geq -2r$. Hence,
\begin{equation}\label{eq:diff}
\sum_{j \in I'} |d_j| \leq 2r
\end{equation}
Combining \eqref{eq:bd} and \eqref{eq:diff} we have that 
\begin{equation*}
\begin{split}
|B_D| &< r + \sum_{j \in I'} \log |d_j| \leq r +
|I'|\log\left(\frac{\sum_{j \in I'} |d_j|}{|I'|}\right) \\
& < r + r\log(2r/r) = O(r).
% r + \log \left(\prod_{j \in I'} |d_j| \right) 
%       \leq r + \log \left( (2r/|I'|)^{|I'|}\right) \\ &< r + \log \left( (2r/r)^{r}\right) = O(r)
\end{split}
\end{equation*}
The second inequality follows from Jensen's inequality combined with
the fact that the logarithm is a concave function. In summary, we have
the following result.
\begin{lemma}\label{lem:encoding}
  All light forward and failure transitions of a segment of size $r$
  can be encoded using $O(r \log \sigma)$ bits.
\end{lemma}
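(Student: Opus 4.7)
The plan is to handle forward and failure transitions separately. For the $r-1$ light forward transitions in a segment, each is uniquely determined by its label, so simply listing the labels in order of source state uses $(r-1)\log\sigma = O(r\log\sigma)$ bits. The real work is to encode the light failure transitions in $O(r)$ bits; combining the two bounds then gives the lemma.

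For the failure transitions, I would first locally renumber the states of the segment to $[0,r-1]$ and record the set $I \subseteq [0,r-1]$ of states with a light failure transition as a characteristic bit string $B_I$ of length exactly $r$. This costs $r$ bits and leaves only the problem of storing the actual target values $f_{i_1},\ldots,f_{i_\ell}$. A direct approach storing each target as a $\lceil\log r\rceil$-bit pointer gives $\Theta(r\log r)$ bits, which is too much. Instead, I would store $f_{i_1}$ explicitly in $\lceil\log r\rceil$ bits and then encode the consecutive differences $d_{i_j}=f_{i_j}-f_{i_{j-1}}$ using a variable-length binary code: a string $B_D$ obtained by concatenating the two's-complement encodings of the $d_{i_j}$ (accounting for negative values), together with a same-length bit string $B_{D'}$ marking the boundaries between encodings so that the sequence can be parsed.

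The main obstacle, and the step where Lemma~\ref{lem:fail} must be used crucially, is showing $|B_D|=O(r)$. Since each $d_{i_j}$ takes at most $1+\log|d_{i_j}|$ bits, the total length is bounded by $|I'| + \sum_{j\in I'}\log|d_{i_j}|$ where $I'=I\setminus\{i_1\}$. I would bound the inner sum by first bounding $\sum_{j\in I'}|d_{i_j}|$. Lemma~\ref{lem:fail} says $\fail(s+1)\leq\fail(s)+1$, so over the $<r$ states of the segment the total \emph{positive} variation in the failure values is at most $r$. Because all failure values lie in $[0,r-1]$, the total \emph{negative} variation can exceed the positive variation by at most the starting value $f_{i_1}\leq r-1$. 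Hence $\sum_{j\in I'}|d_{i_j}|\leq 2r$. Now I would apply Jensen's inequality to the concave function $\log$: $\sum_{j\in I'}\log|d_{i_j}|\leq |I'|\log(2r/|I'|)\leq r\log 2 = O(r)$, where the last bound uses that $x\log(2r/x)$ is maximized in the feasible range at $x=\Theta(r)$. Combining, $|B_D|=O(r)$, and since $|B_{D'}|=|B_D|$ the total failure-transition encoding is $r + \lceil\log r\rceil + 2|B_D| = O(r)$ bits, completing the proof.
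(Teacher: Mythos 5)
Your proof is correct and follows essentially the same route as the paper: the same explicit list of forward-transition labels, the same bit string $B_I$ plus difference-encoded failure targets with $B_D$ and $B_{D'}$, the same use of Lemma~\ref{lem:fail} to bound $\sum_{j\in I'}|d_{i_j}|$ by $O(r)$, and the same Jensen's inequality step to conclude $|B_D|=O(r)$. No gaps to report.
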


% As corollary we note that Lemma~\ref{lem:encoding} implies a \emph{succinct representation} of arbitrary KMP-automata, i.e., a representation of KMP-automata that is both compact and allows fast access to transitions. Given a string $S$ the representation consists of the above encoding of $K(S)$ using $O(|S|\log \sigma)$ bits and a standard \emph{rank/select index}~\cite{Jacobson1989, Munro1996, Clark1996} for the boundary bit string $B_{D'}$ that uses $o(|S|)$ space and allows us to find the position of the $i$th number in $B_{D}$ in constant time. To find the failure transition for a state in $K(P)$ we find the boundaries of failure using the 

\subsection{Simulating Light Transitions}
We now show how to efficiently simulate multiple light transition within
segments. First, we first introduce a number of important concepts.

Let $C = C(P,r)$ be the segment automaton, and consider the path $p$
of states in the simulation on $C$ from a state $(i,j)$ on some string
$q$. Define the \emph{longest light path} from $(i,j)$ on $q$ to be
the longest prefix of $p$ consisting entirely of light non-accepting
transitions in segment $i$. Furthermore, define the \emph{string
  length} of $p$ to be number of forward transitions in $p$, i.e., the
number of characters of $q$ traversed in $p$. For example, consider
state $(1,1)$ in segment $1$ in Fig.~\ref{fig:segment}. The longest
light path on the string $q = \text{bac}$ is the path $p = (1,1),
(1,2), (1,0), (1,1)$. The transition on c from state $(1,1)$ is a
heavy failure transition and therefore not included in $p$. The string
of $p$ is 2.

Our goal is to quickly compute the string length and endpoint of
longest light paths. Let $S^{\enc}$ be the compact encoding of a
segment $S$ as described above including the label of the forward
heavy transition from the rightmost state in $S$ (if any) and a bit
indicating whether or not the rightmost light transition is accepting
or not. Furthermore, let $j$ be a state in $S$, let $q$ be a string,
and define
\begin{relate}
\item[$\Next(S^{\enc}, j, q)$:] Return the pair $(l, j')$, where $l$
  and $j'$ is the string length and final state, respectively, of the longest
  light path in $S$ from $j$ matching a prefix of $q$.
\end{relate}
For example, if $S^{\enc}$ is the encoding of segment $1$
in Fig.~\ref{fig:segment}, then $\Next(S^{\enc}, 1, \text{bac})$
returns the pair $(2, 1)$.

We will efficiently tabulate $\Next$ for arbitrary strings $q$ of
length $r-1$ (the maximum number of light forward transitions in a
segment of size $r$) as follows. Let $b$ be the total number of bits
needed to represent the input to $\Next$. The string $q$ uses $(r-1)
\ceil{\log \sigma}$ bits and by Lemma~\ref{lem:encoding} $S^\enc$ uses
$O(r \log \sigma + \log \sigma + 1) = O(r \log \sigma)$
bits. Furthermore, the state number $j$ uses $\ceil{\log r}$ bits and
hence $b = O(r\log \sigma + \log r) = O(r \log \sigma)$. 

We tabulate all possible results for $\Next$ using a table $T$
containing $2^b$ entries as follows. Let $Z$ be any input to $\Next$
encoded using the above $b$ bit representation and let $\num(Z)$
denote the non-negative integer in $[0, 2^b-1]$ represented by
$Z$. The table stores at position $T[\num(Z)]$ the result of
$\Next$ on input $Z$. We compute each entry using a standard
simulation in $O(r)$ time and therefore we construct $T$ in $2^b \cdot
O(r) = 2^{O(b)}$ time and space. Hence, if we have $t < 2^w$ space
available for $T$ we may set $r = \frac{1}{c} \cdot \frac{ \log
  t}{\log \sigma}$, where $c > 0$ is an upper bound on the constant
appearing in the $2^{O(b)}$ expression above. Hence, the total space
and preprocessing time now becomes $2^{O(b)} =
2^{\frac{1}{c}\frac{c\log t}{\log \sigma} \log \sigma} = O(t)$.

With $T$ precomputed and stored in memory we can now answer arbitrary
$\Next$ queries for arbitrary encoded segments and strings of length
at most $r-1$ in constant time by table lookup.

\section{The Algorithm}\label{sec:algorithm}
We now put the pieces from the previous sections together to obtain
our main result of Theorem~\ref{thm:main}. Assume that we have $t <
2^w$ space available and choose $r = \Theta(\log t/ \log \sigma)$ as
above for the tabulation. We first preprocess $P$ by computing the
following information:
\begin{itemize}
\item The segment automaton $C(P, r)$ with parameter $r$ and $z =
  2\floor{m+1/r}+1$ segments $SS = \{S_0, \ldots, S_{z-1}\}$.
\item The compact encoding $S^\enc$ for each segment $S \in SS$.
\item The tabulated $\Next$ function for segments with $r$ states and
  input string of length $r-1$.
\end{itemize}
We compute the segment automaton and the compact encodings in $O(m)$
time and space. The tabulation for $\Next$ uses $O(t)$ time and space
and hence the preprocessing uses $O(t + m)$ time and space.

We find the occurrences of $P$ in $Q$ using the algorithm described
below. The main idea is to simulate the segment automaton using the
tabulated $\Next$ function with the segment automaton. At each
iteration of the algorithm we traverse light transitions until we
either have processed $r-1$ characters from $Q$ or encounter a heavy or
accepting transition. We then follow the single next transition and
report an occurrence if the transition is accepting. We repeat this
process until we have read all of $Q$. We note that to implement the
algorithm we need to be able to extract any substring of $\leq r-1$
characters from $Q$ quickly even if the substring does not begin at a
word boundary. We use standard shifts to extract such a substring in
constant time.

\paragraph{\bf Algorithm S}(\emph{Packed String Search}). Let $P$ be a
preprocessed string for a parameter $r$ as above. Given a string $Q$
of length $n$ the algorithm finds all occurrences of $P$ in $Q$.
\begin{description}
\item[S1.] [Initialize] Set $(i,j) \leftarrow (0,0)$ and $k \leftarrow
  1$.
\item[S2.] [Do up to $r-1$ light transitions] Compute $(l, j') \leftarrow
  \Next(S_i^{\enc}, j, Q[k, \min(k+r-1, n)])$. At this point $(i,j')$ is
  the state in the traversal of $C$ on $Q$ after reading the prefix
  $Q[1, k+l]$. All transitions on the string $Q[k, k+l]$ are light and
  non-accepting by the definition of $\Next$.
\item[S3.] [Done?] If $k + l= n$ then terminate. In this case step S2
  exhausted remaining characters from $Q$.
\item[S4.] [Do single transition] Compute $(i^*, j^*)$ by traversing the
  transition $\delta$ from $(i,j')$ on character $Q[k+l+1]$. If
  $\delta$ is a failure transition, set $k^* \leftarrow k + l$ and
  otherwise set $k^* \leftarrow k + l + 1$. If $\delta$ is an
  accepting transition report an occurrence ending at position $k^*$.
\item[S5.] [Repeat] Update $(i,j) \leftarrow (i^*,j^*)$ and $k
  \leftarrow k^*$ and repeat from step S2.
\end{description}
For example, consider running Algorithm S on the segment automaton in
Fig.~\ref{fig:segment} with string $Q = \text{abacacababca}$. Here, $r
= 4$ and we therefore process up to $3$ characters from $Q$ in a
single iteration. Initially, we start at state $(0,0)$ in step S1. In
step S2, we compute the longest path of light and non-accepting
transitions matching $Q[1,3] = \text{aba}$. Since there is such a path
from $(0,0)$ in this segment matching aba, we process all of aba and
end at $(0,3)$. Since more characters remain in $Q$, we continue from
step S3. In step S4, we traverse the light failure transition from
$(0,3)$ to $(0,1)$ since $Q[4] = \text{c}$. In step S5, we update
$(i,j) \leftarrow (0,1)$ and $k \leftarrow 3$ and repeat from step
S2. In step S2, we process all characters from $Q[4,6] = \text{cac}$
ending in $(0,0)$. In step S4, we traverse the light forward
transition from $(0,0)$ to $(0,1)$ on $Q[7] = \text{a}$. In iteration
3, we process characters $Q[8,9] = \text{ba}$ in step S2, ending at state
$(0,3)$. The character $Q[10] = \text{b}$ is not traversed since the
corresponding transition is heavy. Instead, in step S5, we traverse b
to $(2,0)$. In iteration 4, we process the character $Q[11] = \text{c}$ in
step S2, ending at $(2,1)$. We cannot traverse the character $Q[12] =
\text{a}$ since the corresponding transition is accepting. Step S5 traverses
the accepting transition and reports an occurrence. In the final
iteration, no light transitions are traversed in step S2 and the
algorithm terminates in step S3.

It is straightforward to verify that Algorithm S simulates $C(P,r)$ on
$Q$ and reports occurrences whenever we encounter an accepting
transition. In each iteration we either read $r-1$ characters from $Q$
and/or perform a heavy or accepting transition. We can process $r-1$
characters from $Q$ on light transitions at most $\ceil{n/(r-1)}$ and by
Lemma~\ref{lem:heavy} the total number of heavy and accepting
transitions is $O(n/r + \occ)$. Hence, the total number of iterations
is $O(n/r + \occ)$. Since each iteration takes constant time this also
bounds the running time. Adding the preprocessing time and plugging in
$r = \Theta(\log t/ \log \sigma)$ the time becomes
$$
O\left(\frac{n}{r} + t + m + \occ\right) = O\left(\frac{n}{\log_\sigma
    t} + t + m + \occ\right)
$$
with space $O(t + m)$. Hence we have the following result.
\begin{theorem}
  Let $P$ and $Q$ be packed strings of length $m$ and $n$,
  respectively. For a parameter $t < 2^w$ we can solve the packed
  string matching problem in time $O\left(\frac{n}{\log_\sigma t} + t
    + m + \occ\right)$ and space $O(t + m)$.
\end{theorem}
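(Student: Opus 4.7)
My plan is to assemble the three ingredients from the previous sections (the segment automaton, the compact encoding of segments, and the tabulation of $\Next$) into a single simulation procedure and then bound the preprocessing cost, the iteration cost, and the iteration count separately.

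First I would fix the tabulation parameter. Given the space budget $t < 2^w$, I would set $r = \Theta(\log t/\log \sigma)$, chosen small enough so that the $b$-bit input to $\Next$ (segment encoding, state index, and a length-$(r{-}1)$ substring of $Q$) satisfies $2^{O(b)} = O(t)$. This is the only place the parameter $t$ enters, and it determines both the segment size and the size of the lookup table.

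Next I would describe the preprocessing: build $K(P)$ in $O(m)$ time, derive the segment automaton $C(P,r)$ with its $z = 2\lfloor (m{+}1)/r\rfloor + 1$ segments, compute the compact encoding $S^{\enc}$ of each segment in $O(m)$ total time using Lemma~\ref{lem:encoding}, and fill in the table for $\Next$ in $O(t)$ time by running a straightforward per-entry simulation of length $O(r)$. This gives $O(t+m)$ preprocessing time and $O(t+m)$ space.

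For the search I would invoke Algorithm~S and argue its correctness and running time. Correctness is immediate from the definitions: step~S2 follows the longest maximal run of light non-accepting transitions using one $\Next$ lookup, and step~S4 handles the single remaining transition (which must be heavy, accepting, or simply the next character after exhausting the window), so the sequence of states visited is exactly the simulation of $C(P,r)$ on $Q$, and every accepting transition corresponds to an occurrence by the construction of $C$. For the time bound, I would charge each iteration to either (i) consuming a fresh block of $r-1$ characters of $Q$ via light transitions, of which there can be at most $\lceil n/(r-1)\rceil = O(n/r)$, or (ii) traversing a heavy or accepting transition in step~S4, of which there are $O(n/r + \occ)$ by Lemma~\ref{lem:heavy}. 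Each iteration is $O(1)$ because $\Next$ is a table lookup and extracting the required length-$(r{-}1)$ window of $Q$ takes constant time via shifts on the packed representation.

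The main subtlety, and the only step I would linger on, is making sure the accounting in step~S2 is tight: the window passed to $\Next$ must be exactly $Q[k,\min(k+r-1,n)]$ so that the returned longest light path really matches a prefix of the remaining text (rather than accidentally running past the end of $Q$), and the returned length $l$ has to be combined with the single extra transition in step~S4 without double-counting the character at position $k+l+1$ (which is consumed on forward transitions but not on failure transitions). Once this bookkeeping is pinned down, combining $r = \Theta(\log t/\log \sigma)$ with the iteration bound yields search time $O(n/\log_\sigma t + \occ)$, and adding the preprocessing gives the claimed $O(n/\log_\sigma t + t + m + \occ)$ time and $O(t+m)$ space. Theorem~\ref{thm:main} follows by choosing $t = n^\varepsilon$.
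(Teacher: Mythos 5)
Your proposal is correct and follows essentially the same route as the paper: choose $r = \Theta(\log t/\log\sigma)$, preprocess the segment automaton, encodings, and $\Next$ table in $O(t+m)$ time and space, and bound the iterations of Algorithm~S by charging each to either a fresh block of $r-1$ characters or a heavy/accepting transition via Lemma~\ref{lem:heavy}. The bookkeeping points you flag about step~S2 and S4 are exactly the ones the paper's description of Algorithm~S handles.
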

Note that the tabulation is independent of $P$. Hence, if we want to
support multiple searches it suffices to precompute the tables
once. If we plugin $t = n^{\varepsilon}$, for $0 < \varepsilon < 1$,
we obtain an algorithm using time $O\left(\frac{n}{\log_\sigma
    (n^\varepsilon)} + n^\varepsilon + m + \occ\right) =
O\left(\frac{n}{\log_\sigma n} + m + \occ\right)$ and space
$O(n^\varepsilon + m)$ thereby showing Theorem~\ref{thm:main}.

\section{Remarks and Open Problems}\label{sec:remarks}
We have presented an almost optimal solution for the packed string
matching problem on a unit-cost RAM with logarithmic word-length. We
conclude with two challenging open problems.

\begin{itemize}
\item Our algorithm relies on tabulation to compute the $\Next$
  function, and therefore its speed is limited by the amount of space
  we have for tables. Consequently, it cannot take advantage of a
  large word length $w \gg \log n$. We wonder if it is possible to
  obtain a packed string matching algorithm that achieves a speed-up
  over the KMP-algorithm that depends on $w$ rather than $\log n$. In
  particular, it might be possible to come up with an algorithm based
  on \emph{word-level parallelism} (a.k.a. \emph{bit
    parallelism}~\cite{BaezaYates1989}), that uses the arithmetic and
  logical instructions in the word-RAM instead of tables to perform
  the computation.
\item It would be interesting to obtain fast algorithms for related
  packed problems. For instance, we wonder if it is possible to obtain
  a similar speed-up for the multi-string matching
  problem~\cite{AC1975}. The Aho-Corasick algorithm~\cite{AC1975} for
  multi-string matching uses an automaton that generalizes the
  KMP-automaton, however, it appears difficult to extend our
  techniques to this automaton.
\end{itemize}

\bibliographystyle{abbrv}
\bibliography{paper}

\end{document}